\newtheorem{thm}{Theorem}
\theoremstyle{remark}
\theoremstyle{definition}
\providecommand{\U}[1]{\protect\rule{.1in}{.1in}}
\newcommand{\V}{\mathcal{V}}
\newcommand{\xb}{\mathbf{x}}
\newcommand{\Rt}{\mathbb{R}^2}
\newcommand{\Exp}{\text{Exp}}
\newcommand{\Ebb}{\mathbb{E}}
\newcommand{\lt}{\mathcal{L}}
\begin{document}

\title{\LARGE Full Duplex Emulation via Spatial Separation of Half Duplex Nodes in a Planar Cellular Network}

\author{\IEEEauthorblockN{Henning Thomsen, Dong Min Kim, Petar Popovski, Nuno K. Pratas, Elisabeth de Carvalho}
\IEEEauthorblockA{Department of Electronic Systems, Aalborg University, Denmark\\
Email: \{ht, dmk, petarp, nup, edc\}@es.aau.dk} }

\maketitle


\begin{abstract}
A Full Duplex Base Station (FD-BS) can be used to serve simultaneously two Half-Duplex (HD) Mobile Stations (MSs), one working in the uplink and one in the downlink, respectively. The same functionality can be realized by having two interconnected and spatially separated Half Duplex Base Stations (HD-BSs), which is a scheme termed \emph{CoMPflex} (CoMP for In-Band Wireless Full Duplex). A FD-BS can be seen as a special case of CoMPflex with separation distance zero. In this paper we study the performance of CoMPflex in a two-dimensional cellular scenario using stochastic geometry and compare it to the one achieved by FD-BSs. By deriving the Cumulative Distribution Functions, we show that CoMPflex brings BSs closer to the MSs they are serving, while increasing the distance between a MS and interfering MSs. Furthermore, the results show that CoMPflex brings benefits over FD-BS in terms of communication reliability. Following the trend of wireless network densification, CoMPflex can be regarded as a method with a great potential to
effectively use the dense HD deployments.
\end{abstract}

\begin{IEEEkeywords}
Full Duplex, Cellular Communications, Clustering, Network Densification, Spatial Model.
\end{IEEEkeywords}


\section{Introduction}\label{sec:Introduction}

As the wireless cellular networks evolve towards the 5G generation, it is
expected that the number of Base Stations (BSs) per area will noticeably
increase~\cite{boccardi2014vodafone}, leading to \emph{network
densification}. The availability of multiple proximate and interconnected BSs leads to
the usage of cooperative transmission/reception techniques, commonly referred to as
\emph{Coordinated Multi-Point (CoMP)}~\cite{lee2012coordinated}.
Motivated by these recent trends, a transmission scheme for serving
bidirectional traffic simultaneously via spatially separated HD-BSs was investigated
in~\cite{thomsen2015compflex}. The scheme emulates Full Duplex (FD) operation using two
interconnected HD-BSs, and is termed \emph{CoMPflex}: CoMP for In-Band Wireless
Full-Duplex. In the initial work, the performance was analyzed through a simplified
one-dimensional Wyner-type deployment model. CoMPflex can be seen as a generalization of FD, where a FD BS corresponds to CoMPflex with interconnection distance zero.
We show that the nonzero separation distance in CoMPflex brings two benefits: (i) The distance between a BS and its associated Mobile Stations (MSs) decreases;
and (ii) the distance between two interfering MSs increases. This translates into
improved transmission success probability in uplink (UL) and downlink (DL).

The use of in-band FD wireless transceivers~\cite{SabharwalSGBRW13} has recently received significant
attention. However, due to the high transceiver complexity, FD is currently only feasible at the network infrastructure side~\cite{6979985} and the MSs keep the HD transceiver mode.
An in-band FD BS can serve one UL and one DL MSs simultaneously, on the same frequency. Other approaches to FD emulation by HD devices have been studied in the literature, such
as having the transmissions in UL and DL (partially) overlap in time. That is, the UL and
DL time slots, which conventionally should take place at separate time intervals, are now
overlapping, an approach used by the Rapid On-Off Division Duplexing (RODD)
in~\cite{5706936}. The authors of \cite{alammouri2015harvesting} consider the physical UL
and DL channels themselves, and have them overlap. Compared to these
approaches, CoMPflex takes advantage of the spatial dimensions in a cellular network.

In this paper, we treat CoMPflex in a two-dimensional scenario, with planar deployment of
interconnected HD-BSs. Following the trend in the literature for modeling spatial
randomness of network nodes, we analyze the performance using the tools of stochastic
geometry. Stochastic geometry has been used in many papers in the literature to model the
placement of network nodes, including FD capable ones as in~\cite{tong2015throughput}.
The setup of CoMPflex is shown in Fig.~\ref{fig:CoMPflexSystemModel}, where one HD-BS
working in the UL cooperates via a wired connection (double solid line) with another HD-BS
that operates in the DL. The solid arrows indicate signals; the dashed arrows
interference. Two interfering cells are also shown, and the boundaries between them are
indicated by dotted lines. Using the interconnection link, the interference from the
DL-BS to the UL-BS is perfectly canceled. Note that, for the sake of clarity, not all
interference from neighboring cells is shown.


\begin{figure}[t]
	\centering
		\includegraphics[width=0.8\linewidth]{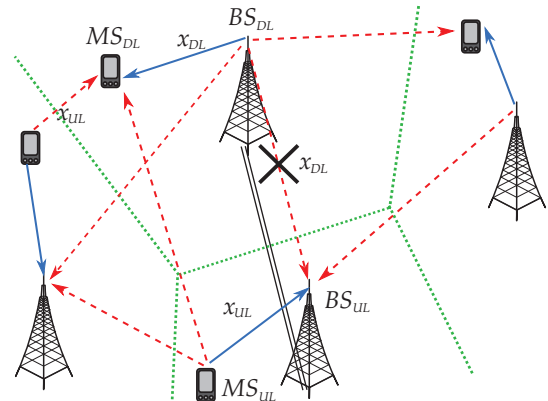}
	\caption{CoMPflex system model.}
	\label{fig:CoMPflexSystemModel}
\end{figure}

The rest of this paper is structured as follows. In Sec.~\ref{sec:SystemModel}, we
describe the system model, deployment assumptions and MS association, as well as the FD
baseline scheme. The signal model and interference characteristics are given in
Sec.~\ref{sec:SignalModel}. In Sec.~\ref{sec:analysis}, we derive and prove the
expression of the success probability for CoMPflex in UL and DL, and explain the
approximations used. The numerical results are provided and discussed in
Sec.~\ref{sec:NumericalResults}, where the effects of the CoMPflex scheme on signal and
interference link distances is shown. The paper is concluded in
Sec.~\ref{sec:Conclusion}.

\section{System Model}\label{sec:SystemModel}

We consider a scenario where HD-BSs serve HD-MSs with bidirectional traffic. We assume
Rayleigh fading with unit mean power. The power of the channel between nodes $i$ and $j$ is
written $g_{ij}$, and from the assumption of Rayleigh fading we have {$g_{ij} \sim
\Exp(1)$}.\footnote{ The notation $g \sim \Exp(\mu)$, means that $g$ is exponentially
distributed with parameter $\mu$.} The distance between nodes $i$ and $j$ is written as
$r_{ij}$. We use the pathloss model $\ell(r) = r^{-\alpha}$, where $\alpha$ is the path
loss exponent. We assume Additive White Gaussian Noise (AWGN) with power $\sigma^2$. Full
channel state information is assumed at all nodes. The default transmission power of a BS and MS is
$P_B$ and $P_M$, respectively.

\subsection{Deployment Assumptions}\label{sub:DeploymentAssumptions}

We assume that the BSs are deployed according to a Poisson Point Process (PPP)
$\Phi_{C}$ with intensity $\lambda_{C}$. The $i-$th BS located at $\xb_i \in \Rt$,
defines a \emph{Voronoi region} $\V(\xb_i)$,
\begin{equation}\label{eqn:VoronoiRegionDefinition}
	\V(\xb_i) = \left\{ \xb \in \Rt \mid \Vert \xb - \xb_i \Vert \le  \Vert \xb - \xb_j \Vert , j \neq i \right\},
\end{equation}
where $\Vert \cdot \Vert$ is the Euclidean distance. This region consists of those points
$\xb$ in $\Rt$ that are closer to the BS at $\xb_i$ than any other BS. From this definition,
the intersection of any two Voronoi regions $\V(\xb_i)$ and $\V(\xb_j)$ is empty, when $i
\neq j$. This concept will be important when we consider the MS association in
Subsec.~\ref{sub:UserAssociationandScheduling}. We assume that the Voronoi tessellation determines the rule by which the MS associates with the BS, both for DL and UL, and further that, at a specific time, only one MS randomly located in a Voronoi cell is active.

\begin{figure}[tb]
\centering
\subfloat[CoMPflex]{
\includegraphics[width=0.475\linewidth]{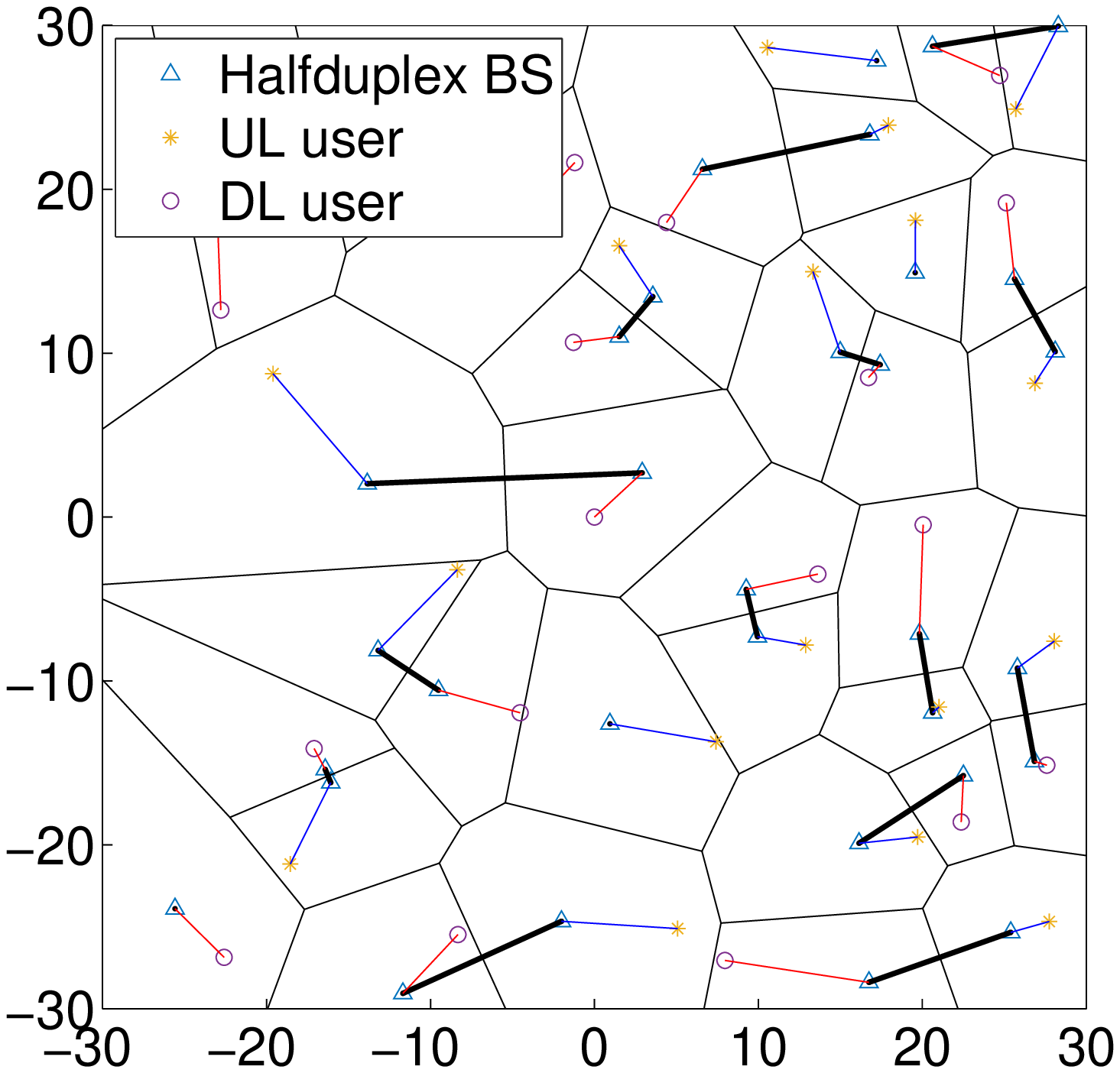}
\label{fig:CoMPflexsnapshot}}
\subfloat[Full Duplex]{
\includegraphics[width=0.475\linewidth]{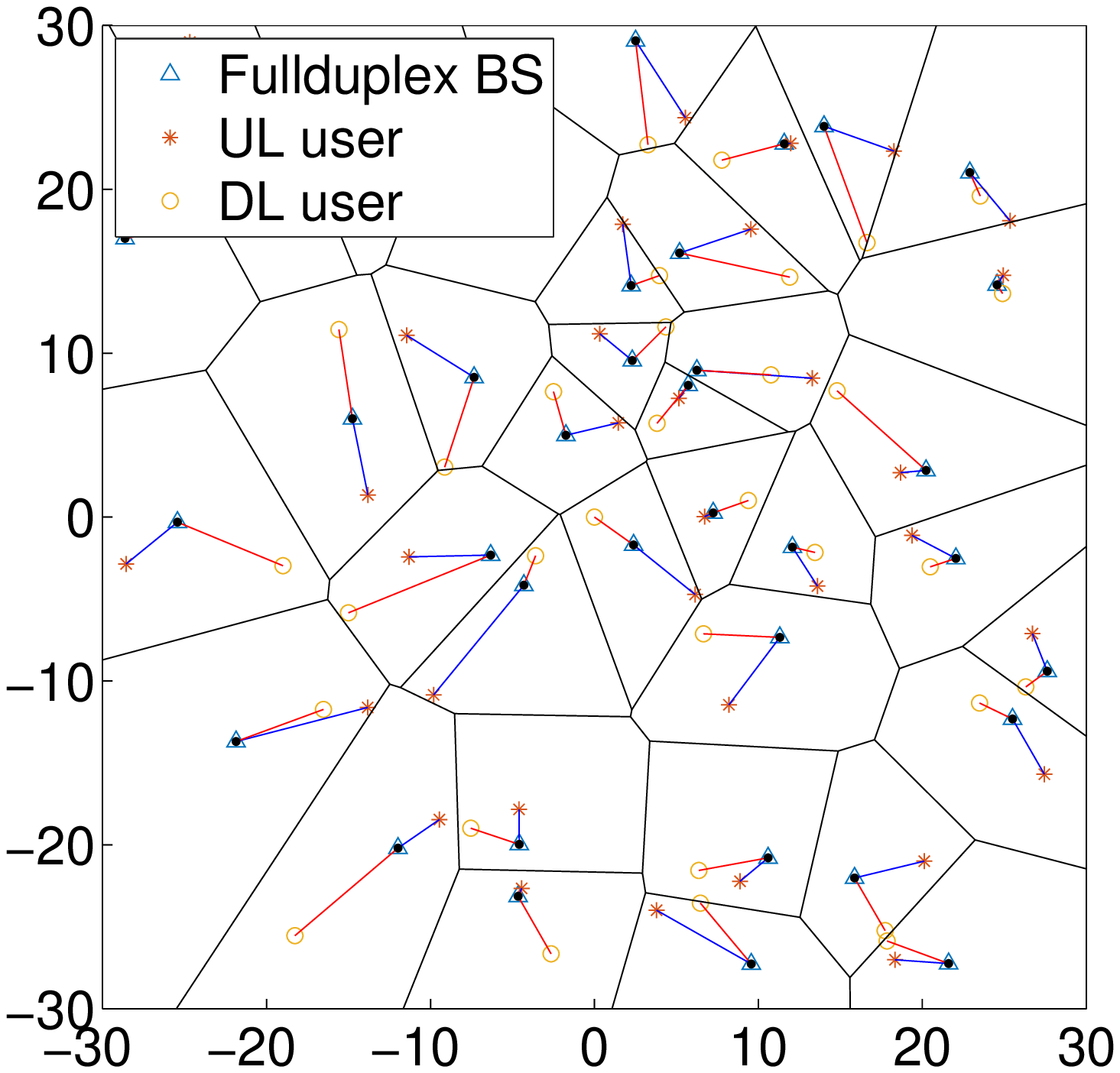}
\label{fig:FDsnapshot}}
\caption{Snapshot of network deployments. In the CoMPflex deployment, each bold line indicates a CoMPflex pair.}
\label{fig:Snapshots}
\end{figure}

\subsection{BS Pairing}\label{sub:BSPairing}

As stated previously, in CoMPflex we assume that all nodes are HD. We define a
\emph{CoMPflex pair} as two adjacent and connected HD-BSs, one serving UL and the other DL traffic. The algorithm for pairing the BSs works as follows:

Given a deployment $\Phi_{C}$, we consider a finite observation window with
dimensions $s$ km (i.e. of size $s^2$ km$^2$), and choose a BS at random in this window.
We then list all the \emph{unpaired} neighbors of this BS, and choose one of those
randomly. These two BSs are then considered to be a CoMPflex pair. The algorithm then
proceeds to the other unpaired BSs, and pairs the adjacent ones that are unpaired. The BS
closest to the origin is called the \emph{typical} BS. Fig.~\ref{fig:CoMPflexsnapshot}
shows one instance of the algorithm, where the CoMPflex pairs are indicated by bold
lines between the corresponding BSs (shown as triangles in the figure). In each CoMPflex
pair, one BS is assigned either UL or DL randomly with probability $0.5$ for each. The
other BS is assigned the opposite traffic direction. Given a MS $i$, the BS serving that
MS is denoted by $B(i)$. The algorithm terminates when it is no longer possible to pair
any BSs. By the rule of pairing, in any CoMPflex pair the two BSs must have adjacent
Voronoi regions. Therefore, any BS whose Voronoi region is surrounded by Voronoi regions
that belong to paired BSs remains unpaired. We assign the unpaired BSs either as UL or DL
at random, with probability $0.5$ for each.

After the BS pairing, all BSs have been assigned either UL or DL. Since BSs are assigned
either UL or DL at random, on average half of the BSs are UL, the other half, DL. We have $\lambda_{C} = \lambda_{C,U} + \lambda_{C,D}$, where $\lambda_{C,U}$ is the density of the CoMPflex UL-BSs, and $\lambda_{C,D}$ the density of the CoMPflex DL-BSs.

We recall that given a PPP $\Phi$ with intensity $\lambda$, we can define a new process by independently
selecting each point in $\Phi$ with probability $p$, resulting in a \emph{thinned} PPP $\Phi'$
with intensity $p\lambda$. We approximate the point processes of UL-BSs and
DL-BSs as independent thinned PPPs. This is only an approximation, as from the algorithm it follows that the selection of DL/UL is not
independent, since a DL-BS must be adjacent to a UL-BS.

\subsection{User Association and Scheduling}\label{sub:UserAssociationandScheduling}

Given the Voronoi regions defined by the BS deployment, the MS in a given region is
scheduled to be served by the BS corresponding to that region. By the definition of a
Voronoi region, a MS is associated to one unique BS. More specifically, for every Voronoi
region $\V(\xb_i)$ one MS is attached to this BS. The position of the MS is chosen
uniformly at random in the region. The traffic direction of the MS is matched to the
corresponding BS, i.e. if the BS is UL, then the MS has UL traffic, similarly for DL.

A snapshot of the deployment and pairing of BSs, along with the associated UL and DL MSs, is shown in
Fig.~\ref{fig:CoMPflexsnapshot}.

\subsection{Full Duplex Baseline Scheme}\label{sub:BaselineSchemes}

In the FD baseline scheme, the BSs are deployed according to a PPP $\Phi_{F}$ with
intensity $\lambda_{F} = 0.5\lambda_{C}$. In each Voronoi region, one UL MS and one DL MS
is served. The location of the two MSs attached to BS $i$ at location $\xb_i$ is chosen
uniformly at random inside the Voronoi region $\V(\xb_i)$ of the BS. The two MSs are
assumed to be HD devices. Note that in this setting, the number of MSs is the same as
that of CoMPflex, since each FD BS serves two MSs.

\section{Signal Model}\label{sec:SignalModel}

The UL signal to interference plus noise ratio (SINR) at BS $B(i)$ is
\begin{equation}\label{eqn:SINRUL}
	\gamma_{B(i)} = \frac{ g_{i,B(i)} \ell(r_{i,B(i)}) P_M }{ I^\psi_{B(i)} + I^\varphi_{B(i)} + \sigma^2 }.
\end{equation}
In the above, the numerator represents the UL signal. In the denominator, the term
$I^\psi_{B(i)}$ is the interference from other DL BSs, $I^\varphi_{B(i)}$ is the
interference from other UL MSs, and $\sigma^2$ is the AWGN. The interference can be
written as:
\begin{align}
	I^\psi_{B(i)} = \sum_{ u \in \psi_{B(i)}} g_{u,B(i)} \ell(r_{u,B(i)}) P_B, \\
	I^\varphi_{B(i)} = \sum_{ v \in \varphi_{B(i)}} g_{v,B(i)} \ell(r_{v,B(i)}) P_M,
\end{align}
where $\psi_{B(i)}$ and $\varphi_{B(i)}$ are the sets of interfering BSs and MSs respectively. The DL SINR at MS $j$ is given as
\begin{equation}\label{eqn:SINRDL}
	\gamma_{j} = \frac{ g_{B(j),j} \ell(r_{B(j),j}) P_B }{ I^\psi_{j} + I^\varphi_{j} + \sigma^2 }.
\end{equation}
In the above, the numerator represents the DL signal. The first term in the denominator,
$I^\psi_{j}$, is the aggregate interference from other DL BSs to DL MS $j$, and
$I^\varphi_{j}$ is the interference from the other UL MSs. The interference terms equal
\begin{align}
	I^\psi_{j} &= \sum_{ u \in \psi_{j}} g_{u,j} \ell(r_{u,j}) P_B, \\
	I^\varphi_{j} &= \sum_{ v \in \varphi_{j}} g_{v,j} \ell(r_{v,j}) P_M ,
\end{align}
where $\psi_{j}$ and $\varphi_{j}$ are the sets of interfering BSs and MSs.

\section{Reliability Analysis}\label{sec:analysis}

We analyze the performance of CoMPflex using the transmission success probability.
This metric and its complement, the outage probability, are often used in works that analyze
cellular networks through stochastic geometry. In the analysis, we consider a pair of
typical BSs and their associated MSs. The typical UL-BS is denoted $B(U)$ and the DL-BS $B(D)$,
while the UL-MS and DL-MS are denoted $U$ and $D$ respectively. These BSs and MSs represent the performance of the
entire network. We write the SINR at this BS as $\gamma_U$ (for UL). Similarly, the DL SINR
at a typical MS is written as $\gamma_D$. A transmission is successful if the SINR is not
lower that the target threshold SINR at the receiver.

\subsection{UL and DL Distance Distributions}\label{sub:ULandDLDistanceDistribution}

Recall that the BSs are deployed according to a PPP with density $\lambda_{C}$. The
distribution of the distance between a DL-BS $B(i)$ and its associated MS $i$ is denoted
$f_{r_{B(i),i}}(r)$. In deriving this distribution, we assume that the BS is located at
the origin, i.e. we consider a typical BS. The distance is then denoted
$f_{r_{U,B(U)}}(r)$. Similarly, the density of the distance between a MS $i$ and its
UL-BS $B(j)$ is written $f_{r_{B(U),U}}(r)$.

As stated from the assumptions, the location of the scheduled MS is uniform at random
inside the Voronoi region of the BS. For analytical tractability, we assume that the
location of the MS can be any point in $\Rt$. Under this assumption, the distance from the
typical BS to its MS then has the Cumulative Distribution Function (CDF):
\begin{equation}\label{eqn:ULDLdistanceCDF}
	F_{r_{B(U),U}}(r) = \Pr \{ r_{B(i),i} \le r \} = 1-\exp \left( -\lambda_{C} \pi r^2 \right),
\end{equation}
This simplification is routinely made in the literature (see
e.g.~\cite{novlan2013analytical}) for analytical tractability. We assume that the UL
distance CDF $F_{r_{U,B(U)}}(r) $  is the same as the DL. The numerical results confirm that this approximation is reasonable.

\subsection{Transmission Success Probability of CoMPflex}\label{sub:SuccessProbabilityCoMPflex}

In this section, we approximate the success probabilities in UL and DL for CoMPflex. In the
derivations, we assume that BSs and MSs are deployed according to independent PPPs with
density $\lambda_C$. Note that we approximate the locations of the MSs as a PPP, even though they are constrained to be inside the Voronoi cell of their serving BS. Also recall that the interference from the paired DL-BS to the UL-BS is cancelled, and this is reflected in the interference expressions in the proof.

\begin{thm}\label{thm:SuccessProbabilityCoMPflexUplink}
Assuming independent PPP deployment of MSs and BSs, the success probability in UL in CoMPflex is
\begin{equation}\label{eqn:SuccessProbabilityCoMPflexUL}
	P_U^{C} \!=\! 2\pi\lambda_C \!\!\int_0^\infty \!\!\!r \exp \left( -\pi\lambda_C r^2 -s\sigma^2 \right) \lt_{\psi}(s) \lt_{\varphi}(s)  \mathrm{d}r,
\end{equation}
where $s=\frac{\mu\beta_U r^\alpha}{P_M}$ and the Laplace transforms of the interference
from BSs $\lt_{\psi}(s)$ and MSs $\lt_{\varphi}(s)$ are
\begin{align}
\label{E:ltpsi}
	&\lt_{\psi}(s) = \int_0^\infty 2\pi\lambda_{C,D} t \exp \left( -\pi \lambda_{C,D} t^2 \right) \cdot \nonumber \\  & \exp \left(  -2\pi\lambda_{C,D} \int_t^\infty \frac{\beta_U \frac{P_B}{P_M} \left( \frac{r}{x} \right)^\alpha}{1+ \beta_U\frac{P_B}{P_M} \left( \frac{r}{x} \right)^\alpha} x \; \mathrm{d}x \right) \mathrm{d}t,
\end{align}
\begin{align}
\label{E:ltvarphi}
	&\lt_{\varphi}(s) \!=\! \exp \!\left(\! -2\pi\lambda_{C,U} \int_r^\infty \!\!\frac{\beta_U \left( \frac{r}{y} \right)^\alpha}{1+ \beta_U \left( \frac{r}{y} \right)^\alpha} y \; \mathrm{d}y \!\right).
\end{align}
\end{thm}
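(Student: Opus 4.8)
The plan is to compute the UL success probability $P_U^C = \Pr\{\gamma_U \ge \beta_U\}$ by conditioning on the signal-link distance $r_{U,B(U)}$ and exploiting the Rayleigh fading assumption $g \sim \Exp(1)$. The key structural fact is that, because the desired-signal power $g_{U,B(U)}$ is exponential, the conditional success event $\{g_{U,B(U)} \ell(r) P_M \ge \beta_U(I^\psi + I^\varphi + \sigma^2)\}$ has probability $\exp(-s(I^\psi + I^\varphi + \sigma^2))$ where $s = \mu\beta_U r^\alpha / P_M$ (with $\mu=1$ the mean fading parameter). This is the standard trick that turns a tail probability into an expectation of an exponential of the aggregate interference, so after averaging over the interference I will obtain a product of Laplace transforms.

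First I would write $P_U^C = \E\big[\Pr\{g_{U,B(U)} \ge s(I^\psi + I^\varphi + \sigma^2)/(\ell(r)P_M)\,|\,r, I^\psi, I^\varphi\}\big]$, invoke the exponential CDF to get $\exp(-s\sigma^2)\,\E[e^{-sI^\psi}]\,\E[e^{-sI^\varphi}]$ conditioned on $r$, where independence of the two interference fields (BSs versus MSs, modeled as independent thinned PPPs per Sec.~\ref{sub:BSPairing}) factorizes the expectation into $\lt_\psi(s)\lt_\varphi(s)$. Next I would deconditional on $r$ using the signal-distance density obtained by differentiating the CDF in~\eqref{eqn:ULDLdistanceCDF}, namely $f_{r_{U,B(U)}}(r) = 2\pi\lambda_C r\exp(-\pi\lambda_C r^2)$, which produces the outer integral $2\pi\lambda_C\int_0^\infty r\exp(-\pi\lambda_C r^2 - s\sigma^2)\lt_\psi(s)\lt_\varphi(s)\,\mathrm{d}r$ matching~\eqref{eqn:SuccessProbabilityCoMPflexUL}.

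The two Laplace transforms then need to be evaluated separately. For the MS interference $\lt_\varphi(s) = \E[\exp(-s\sum_{v\in\varphi} g_{v,B(U)}\ell(r_{v})P_M)]$, the interfering UL-MSs form a PPP of intensity $\lambda_{C,U}$; I would apply the probability generating functional (PGFL) of a PPP, first averaging over the exponential fading $g_v$ to get the factor $\beta_U(r/y)^\alpha/(1+\beta_U(r/y)^\alpha)$ inside the exponent, then integrating over the plane. The lower limit $y=r$ in~\eqref{E:ltvarphi} reflects that the nearest interfering MS is no closer than the serving distance $r$ — an approximation that the nearest interferer lies outside the disk of radius $r$. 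For the BS interference $\lt_\psi(s)$, the same PGFL machinery applies to the DL-BS PPP of intensity $\lambda_{C,D}$ with power ratio $P_B/P_M$, but here there is the extra subtlety that the dominant (nearest) interfering DL-BS is conditioned separately: the outer integral $\int_0^\infty 2\pi\lambda_{C,D}t\exp(-\pi\lambda_{C,D}t^2)(\cdots)\,\mathrm{d}t$ averages over the distance $t$ to the nearest interfering BS, while the inner exponential accounts for the remaining BSs beyond distance $t$.

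The main obstacle I anticipate is justifying the precise integration limits and the separate treatment of the nearest BS interferer in $\lt_\psi(s)$. The structure of~\eqref{E:ltpsi}, with an outer average over the nearest-DL-BS distance $t$ and an inner PGFL over interferers beyond $t$, is not the naive PGFL one would write down for an unconditioned PPP; it reflects the physical constraint that in the CoMPflex pairing the nearest DL-BS is the paired partner whose interference is \emph{cancelled}, so the effective interference field begins at the second-nearest DL-BS. Making this reasoning rigorous — i.e. identifying which conditioning on $t$ correctly captures the cancellation of the paired BS while treating the PPP approximation of the MS and BS locations consistently — is the delicate step; the remaining PGFL integrals and the Rayleigh-fading averaging are routine once the geometry and the limits are pinned down.
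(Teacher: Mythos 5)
Your proposal is correct and follows essentially the same route as the paper's proof: conditioning on the serving distance $r$, using the exponential fading to turn the tail probability into $\exp(-s\sigma^2)\lt_\psi(s)\lt_\varphi(s)$, deconditioning with the Rayleigh density $2\pi\lambda_C r e^{-\pi\lambda_C r^2}$, and evaluating both Laplace transforms via the PGFL with fading averaged by the exponential MGF. You also correctly identified the one non-routine step and its resolution — in $\lt_\psi(s)$ the paper conditions on the nearest-DL-BS distance $t$, treats that nearest BS as the paired (perfectly cancelled) partner so the PGFL integral starts at $t$, and averages over $t$ with the nearest-neighbor density, exactly as you describe; likewise the lower limit $r$ in $\lt_\varphi(s)$ matches the paper's assumption that the nearest interfering UL-MS is no closer than the served MS.
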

\begin{proof}
	We consider the UL SINR $\gamma_U$, and choose a typical BS. Then we condition on the distance from the BS to the nearest UL-MS being $r$. The success probability is
	\begin{align}
		P_U^C \!=\! \Pr \{ \gamma_U \!\ge\! \beta_U \} \!=\!\!\! \int_0^\infty \!\!\!\!\Pr \{ \gamma_U \ge \beta_U \!\mid\! r \} f_{U,B(U)} (r) \mathrm{d}r.  \nonumber
	\end{align}
The conditioned CDF of the SINR equals (note that we drop the explicit notation of the conditioning for readability)
\begin{align}
	&\Pr \left\{ \gamma_U \ge \beta_U \mid r \right\} = \Pr \left\{ \frac{g_{U,B(U)} r^{-\alpha} P_M }{ I_{B(U)}^\psi + I_{B(U)}^\varphi + \sigma^2 } \ge \beta_U \right\} \nonumber \\
	&\stackrel{(a)}{=} \Ebb \left[ \exp \left( - s (I_{B(U)}^\psi + I_{B(U)}^\varphi + \sigma^2)  \right) \right] \nonumber \\
	&= \exp \left( -s\sigma^2 \right) \lt_{\psi}(s) \lt_{\varphi}(s) \nonumber,
\end{align}
where in $(a)$ we have used that $g_{U,B(U)} \sim \Exp(\mu)$, and we set $s=\frac{\mu
\beta_U r^{\alpha}}{ P_M }$. We now derive the interference from the other DL-BSs. In the
derivation, we condition on the distance to the nearest interfering DL-BS to be $t$
because this distance is independent from $r$. The distance $t$ represents the
approximation of the distance to the paired DL-BS, whose transmission is perfectly
cancelled and this gives an upper bound. The density of interfering DL-BSs is
$\lambda_{C,D}$. Then
\begin{align}
	&\lt_{\psi}(s) = \Ebb_{I^\psi} \left[ \exp \left( - s \sum_{i\in \psi_{B(i)}} P_B g_{i,B(i)}r_{i,B(i)}^{-\alpha} \right) \right]  \nonumber \\
	&\stackrel{(a)}{=} \Ebb_{I^\psi} \left[ \prod_{i\in \psi_{B(i)}} \Ebb_g \left[ \exp \left( - s  P_B g_{i,B(i)}r_{i,B(i)}^{-\alpha} \right) \right] \right]  \nonumber \\
	&\stackrel{(b)}{=} \!\exp\! \left(\! -2\pi\lambda_{C,D} \!\!\!\int_t^\infty \!\!\!\left(\! 1\!-\!\Ebb_g \!\left[ \exp \!\left( \!-\! s  P_B g_{i,B(i)}x^{-\alpha} \!\right) \!\right] \!\right) x \mathrm{d}x \!\right)  \nonumber \\
	&\stackrel{(c)}{=} \exp \left( -2\pi\lambda_{C,D} \int_t^\infty \left( \frac{s P_B x^{-\alpha}}{1+s P_B x^{-\alpha}} \right) x \mathrm{d}x \right),  \nonumber
\end{align}
where in $(a)$ we have used that the channels $g_{i,B(i)}$ are independent, $(b)$ is from
the Probability Generating Functional (PGFL) of a PPP with density $\lambda_{C,D}$ and $x=r_{i,B(i)}$, and
in $(c)$ we rewrite using the Moment Generating Function (MGF) of an exponential random
variable. Combining this with the Probability Density Function (PDF) of the distance $t$ and using $s=\frac{\mu \beta_U
r^{\alpha}}{ P_M }$, we get Eq.~\eqref{E:ltpsi}. Using similar arguments, Eq.~\eqref{E:ltvarphi} also can be derived.
Note however, that in Eq.~\eqref{E:ltvarphi}, the distance to the nearest interfering
UL-MS follows the same distribution as the distance to the served UL-MS.
\end{proof}
For DL, recall that we approximate the interfering MSs as a PPP, and this approximation implies that an interfering UL-MS could be inside the Voronoi cell of the DL-MS. However, in CoMPflex there is exactly one MS in each cell. Therefore, the interference is overestimated.

\begin{thm}\label{thm:SuccessProbabilityCoMPflexDownlink}
Assuming independent PPP deployment of MSs and BSs, the success probability in DL for CoMPflex is
\begin{equation}\label{eqn:SuccessProbabilityCoMPflexDL}
	P_D^{C} \!=\! 2\pi\lambda_C \!\!\int_0^\infty \!\!\!\!r \exp \left( -\pi\lambda_C r^2 -s\sigma^2 \right) \lt_{\psi}(s) \lt_{\varphi}(s) \mathrm{d}r,
\end{equation}
where $s=\frac{\mu \beta_D r^\alpha}{P_B}$ and the Laplace transforms of the interference
from BSs $\lt_{\psi}(s)$ and MSs $\lt_{\varphi}(s)$ are
\begin{equation}
	\lt_{\psi}(s) = \exp \left( -2\pi\lambda_{C,D} \int_r^\infty \frac{ \beta_D  \left( \frac{r}{x} \right)^\alpha }{1+ \beta_D  \left( \frac{r}{x} \right)^\alpha} x \; \mathrm{d}x \right),
\end{equation}
\begin{equation}\label{eqn:LTofDLUser}
	\lt_{\varphi}(s) \!=\! \exp \left(\! -2\pi\lambda_{C,U} \!\!\int_0^\infty \!\!\frac{ \beta_D \frac{P_M}{P_B} \left( \frac{r}{y} \right)^\alpha }{1+ \beta_D \frac{P_M}{P_B} \left( \frac{r}{y} \right)^\alpha} y \; \mathrm{d}y \!\right).
\end{equation}
\end{thm}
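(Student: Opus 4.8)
The plan is to mirror the proof of Theorem~\ref{thm:SuccessProbabilityCoMPflexUplink}, adapting it to the fact that the receiver is now the typical DL-MS $D$ rather than a BS, and that the desired signal carries BS power $P_B$. First I would condition on the signal-link distance $r$ between the serving DL-BS and $D$, whose PDF follows by differentiating the CDF in Eq.~\eqref{eqn:ULDLdistanceCDF}, namely $f(r)=2\pi\lambda_C r\exp(-\pi\lambda_C r^2)$, and write $P_D^C=\int_0^\infty \Pr\{\gamma_D\ge\beta_D\mid r\}\,f(r)\,\mathrm{d}r$.

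Next I would evaluate the conditional success probability. Since the desired channel $g_{B(D),D}\sim\Exp(\mu)$, the event $\{\gamma_D\ge\beta_D\}$ in Eq.~\eqref{eqn:SINRDL} is equivalent to $\{g_{B(D),D}\ge \beta_D r^\alpha (I^\psi_D+I^\varphi_D+\sigma^2)/P_B\}$, and applying the exponential CCDF turns this into the expectation $\Ebb[\exp(-s(I^\psi_D+I^\varphi_D+\sigma^2))]$ with $s=\mu\beta_D r^\alpha/P_B$. Because the interfering DL-BS and UL-MS fields are independent PPPs, this factorizes as $\exp(-s\sigma^2)\lt_\psi(s)\lt_\varphi(s)$, exactly as in the UL case but with the new value of $s$ built on $P_B$.

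The two Laplace transforms are then computed via the PGFL of the relevant PPP followed by the MGF of an $\Exp(1)$ fading variable, $\Ebb_g[\exp(-\theta g)]=1/(1+\theta)$ (so $\mu=1$ in the substitutions below, matching the convention already used in the UL transforms), which produces integrands of the form $\theta/(1+\theta)$. The crucial---and really the only nontrivial---step is to fix the correct radial limits, which differ from the UL case because the geometry seen by an MS receiver differs from that seen by a BS. For $\lt_\psi(s)$, the interference from the other DL-BSs of density $\lambda_{C,D}$ and power $P_B$, I would argue that by Voronoi association the serving DL-BS is the \emph{nearest} BS, at distance $r$; hence every interfering DL-BS lies at distance at least $r$, so the integral runs over $x\in[r,\infty)$ with \emph{no} separate conditioning distance and \emph{no} cancellation term, in contrast to the UL transform~\eqref{E:ltpsi} where the paired DL-BS is cancelled. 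Substituting $\theta=sP_B x^{-\alpha}=\beta_D(r/x)^\alpha$ then yields the stated $\lt_\psi(s)$. For $\lt_\varphi(s)$, the interference from the other UL-MSs of density $\lambda_{C,U}$ and power $P_M$, there is no exclusion region around $D$: under the PPP approximation an interfering UL-MS may fall arbitrarily close to $D$---indeed even inside its cell, as noted in the remark preceding the theorem---so the integral runs over $y\in[0,\infty)$, and with $\theta=sP_M y^{-\alpha}=\beta_D\frac{P_M}{P_B}(r/y)^\alpha$ this reproduces Eq.~\eqref{eqn:LTofDLUser}.

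Finally I would reassemble the pieces, inserting $f(r)=2\pi\lambda_C r\exp(-\pi\lambda_C r^2)$ together with the two transforms into the conditional expression, to obtain Eq.~\eqref{eqn:SuccessProbabilityCoMPflexDL}. The main obstacle is not the algebra, which is routine once the UL template is in hand, but the geometric justification of the two different lower limits ($r$ versus $0$) and the recognition that, unlike the UL receiver, the DL-MS has no paired node whose interference is cancelled. This is precisely what makes $\lt_\psi$ structurally simpler in the DL than in the UL, while $\lt_\varphi$ integrates all the way from the origin and thereby overestimates the interference, consistent with the remark preceding the statement.
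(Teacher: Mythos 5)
Your proposal is correct and is essentially the proof the paper intends: the paper omits the DL proof with the remark that it ``follows similar steps as the one for UL,'' and your argument spells out exactly those steps, with the right modifications ($s=\mu\beta_D r^\alpha/P_B$, no conditioning distance $t$ and no cancellation term in $\lt_\psi$, integration of $\lt_\varphi$ from $0$). Your geometric justifications of the integration limits (Voronoi exclusion radius $r$ for interfering DL-BSs, no exclusion region for interfering UL-MSs) also match the paper's own remarks surrounding the theorem.
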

\begin{proof}
	The proof follows similar steps as the one for UL, and so is omitted.
\end{proof}

Note that the integration range of Eq.~\eqref{eqn:LTofDLUser} starts at $0$, since there is no interference cancellation in DL, contrary to UL. From this, we obtain a lower bound on the success probability.

\subsection{Transmission Success Probability of Full Duplex}\label{sub:SuccessProbabilityFD}
In the FD baseline, for DL, Eq. (9) in~\cite{psomas2015outage} gives the outage probability of a scenario similar to our FD baseline. The success probability can be directly derived from that equation.

In deriving the UL success probability, we can use a strategy similar to the one used for DL in~\cite{psomas2015outage}. Then, the UL success probability in the FD baseline is
\begin{equation}\label{eqn:SuccessProbabilityFDUL}
	P_U^{F} \!=\! 2\pi\lambda_{F} \!\!\int_0^\infty \!\!r \exp \left( -\pi\lambda_{F}r^2 \!-\!s\sigma^2 \right) \lt_{\psi}(s) \lt_{\varphi}(s) \; \mathrm{d}r,
\end{equation}
where $s=\frac{\mu\beta_U r^\alpha}{P_M}$ and the Laplace transforms of the interference
from BSs $\lt_{\psi}(s)$ and MSs $\lt_{\varphi}(s)$ are
\begin{equation}
	\lt_{\psi}(s) = \exp \left( -2\pi\lambda_F \int_r^\infty \frac{\beta_U \frac{P_B}{P_M} \left( \frac{r}{x} \right)^\alpha }{1+\beta_U \frac{P_B}{P_M} \left( \frac{r}{x} \right)^\alpha} x \; \mathrm{d}x \right),
\end{equation}
\begin{equation}
	\lt_{\varphi}(s) = \exp \left( -2\pi\lambda_F \int_r^\infty \frac{\beta_U \left( \frac{r}{y} \right)^\alpha }{1+\beta_U \left( \frac{r}{y} \right)^\alpha} y \; \mathrm{d}y \right).
\end{equation}

\section{Numerical Results}\label{sec:NumericalResults}

We show the performance of CoMPflex, and the comparison with the FD baseline schemes, using both numerical simulations and the analytical model given in the previous section. The simulation assumptions are shown in Table~\ref{tab:SimulationParameters}, where the densities are chosen comparable with~\cite{psomas2015outage}.

\begin{table}[h]
	\caption{Simulation parameters.}
	\centering
			\begin{tabular}{ c l c }
		  	\hline
  			Parameter & Description & Simulation Setting\\
				\hline
				$s$ & Size of observation window & $200$ km \\
				$\lambda_{C}$ & BS density (CoMPflex) & $0.02$ ${\text{BS}}/{\text{km}^2}$ \\
				$\lambda_{F}$ & BS density (FD) & $0.01$ ${\text{BS}}/{\text{km}^2}$ \\
				$\sigma^2$ & Noise power at MS and BS & $-174$ dBm \\
				$\alpha$ & Path loss exponent & $4$ \\
				$\beta$ & SINR thresholds & $-20, -15, -10,\ldots, 20$ dB \\
				$P_B$ & BS transmission power & $40$ dBm \\
				$P_M$ & MS transmission power & $20$ dBm \\
				\hline
			\end{tabular}
	\label{tab:SimulationParameters}
\end{table}

We study the success probability in both UL and DL for CoMPflex, in terms of varying the
SINR threshold, and compare with FD. The BS and MS transmission powers are held constant
according to the values in Tab.~\ref{tab:SimulationParameters}. The UL success
probability for CoMPflex and FD, both simulation and analytical, is shown in
Fig.~\ref{fig:SuccessProbabilityULComparison}. Here, the analytical curve approximates
the simulated values quite closely. However, the success probability in UL is lower than
DL, which can be partially explained by the MS power being lower than the BS power.

The resulting success probability for DL is shown in Fig.~\ref{fig:SuccessProbabilityDLComparison}. In this figure, we can observe that the analytical derivations result in a lower bound on the success probability. This was to be expected, since the point processes in CoMPflex are not truly PPP. However, as the figure shows, the PPP approximation is quite close and serves well as an indicator of the expected performance of CoMPflex. We also observe that the success probability in CoMPflex is about $30 \%$ higher than FD, for most of the range of SINR thresholds. The explanation of this can be attributed to how CoMPflex affects the distances of the signal and interference links.

\begin{figure}[t]
	\centering
		\includegraphics[width=0.85\linewidth]{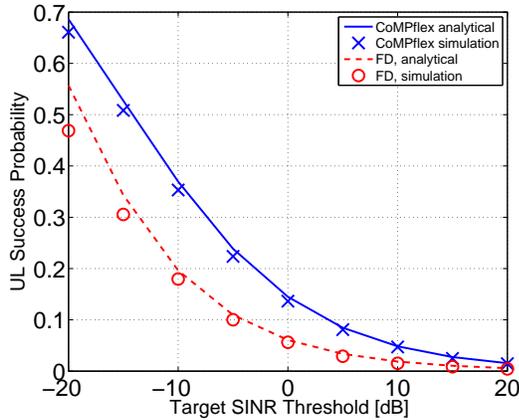}
	\caption{Success Probability in UL vs. SINR threshold.}
	\label{fig:SuccessProbabilityULComparison}
\end{figure}

\begin{figure}[t]
	\centering
		\includegraphics[width=0.85\linewidth]{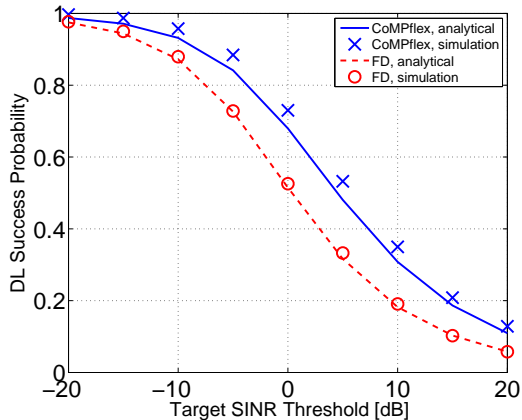}
	\caption{Success Probability in DL vs. SINR threshold.}
	\label{fig:SuccessProbabilityDLComparison}
\end{figure}

Since one of the main features in CoMPflex is that it brings the MSs closer to the serving BSs, we analyze and compare the CDFs of the various distances for signal and intra-cell interference links in CoMPflex and FD. This comparison is shown in Fig.~\ref{fig:CDFdistances}. The simulated CDF are shown as lines, while the analytical CDF using Eq.~\eqref{eqn:ULDLdistanceCDF} are shown as markers\footnote{Note that the analytical CDF of CoMPflex is shifted to the left, compared to FD, since $\lambda_{F} = 0.5 \lambda_{C}$}. From this figure, we can observe two important points:

First, we compare the CDFs of the distances between an UL-MS and BS, and between a BS and DL-MS. We see that for both CoMPflex and FD, the UL and DL distance curves overlap. This implies that the distances of UL and DL follow the same distribution. What is also interesting is that the CDFs of the distances in CoMPflex are shifted to the \emph{left}, compared to FD. This means that the lower MS to BS distances have higher probability in CoMPflex compared to FD.

Second, the CDF curve of the intra-cell interference distance in CoMPflex is shifted to the \emph{right} compared to FD. This means that higher interference distances have higher probability in CoMPflex compared to FD. Taken together, these two points can explain the performance advantages of CoMPflex over FD, which come from having a lower signal distance and a higher interference distance \emph{simultaneously}.

\begin{figure}[t]
	\centering
		\includegraphics[width=0.85\linewidth]{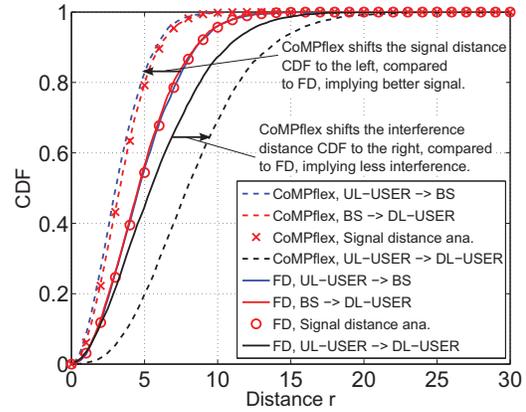}
	\caption{Comparison of the CDFs of the node distances in the CoMPflex and FD scenarios.}
	\label{fig:CDFdistances}
\end{figure}

\section{Conclusion}\label{sec:Conclusion}

In this work, we have analyzed the performance of CoMPflex in a planar network setting, and compared it with a FD baseline scheme. We have derived the success probability for UL and DL, and validated the results via simulations. It was observed that the success probability of both UL and DL was higher for CoMPflex than in FD, an effect which can be attributed to the effects of CoMPflex on the node distances. Thus it is beneficial to consider the usage of HD devices instead of FD, which is also useful for using already existing technology and avoiding the signal complexities of FD.

One promising research direction is to consider more than two connected BSs, and more general clustering criteria. Also, it would be interesting to compare CoMPflex with other interference mitigation techniques such as CoMP.

\section*{Acknowledgment}

This work was supported by Innovation Fund Denmark, via the Virtuoso project.

\end{document}